\newcommand{\Real}{\mathbb{R}}
\newcommand{\inv}{{\,\text{-}\hspace{-1pt}1}}
\newcommand{\g}{\mathfrak g}
\renewcommand{\a}{\mathfrak a}
\renewcommand{\k}{\mathfrak k}
\newcommand{\p}{\mathfrak p}
\begin{document}
\title{K--P Quantum Neural Networks}
%
%
\author{Elija Perrier \inst{1}\orcidID{0000-0002-6052-6798} 
}
\authorrunning{E. Perrier}
%
\institute{Centre for Quantum Software and Information, University of Technology, Sydney \email{elija.perrier@gmail.com}}
\maketitle              
\begin{abstract}
We present an extension of K--P time-optimal quantum control
solutions using global Cartan $KAK$ decompositions for geodesic-based solutions. Extending recent time-optimal \emph{constant--$\theta$} control results, we integrate Cartan methods into equivariant quantum neural network (EQNN) for quantum control tasks. We show that a finite-depth limited EQNN ansatz equipped with Cartan layers can
replicate the constant--$\theta$ sub-Riemannian geodesics for K--P problems. We demonstrate how, for certain classes of control problem on Riemannian symmetric spaces, gradient-based training using an appropriate cost
function converges to global time-optimal solutions when satisfying simple regularity conditions. This generalises prior geometric control theory methods
and clarifies how optimal geodesic estimation can be performed in quantum
machine learning contexts.

\keywords{Quantum control \and K--P problem \and Equivariant QNN \and Cartan decomposition \and
Optimal geodesics \and Sub-Riemannian geometry \and Machine learning.}
\end{abstract}
\section{Introduction}
\label{sec:intro}
Time-optimal control of quantum systems is central to many areas of quantum technology,
ranging from fast gate synthesis in quantum computing to high-fidelity pulse shaping
in nuclear magnetic resonance \cite{khaneja_cartan_2001,nielsen_geometric_2006,jurdjevic_geometric_1997}.
The \emph{K--P problem} \cite{jurdjevic_hamiltonian_2001} is a canonical formulation of such time-optimal tasks. K--P problems involve a semisimple Lie algebra $\g$ into $\k\oplus \p$ under a Cartan (or
\emph{involution-based}) decomposition \cite{knapp_lie_1996}. The horizontal controls
come from $\p$, while the compact part $\k=[\p,\p]$ must be generated indirectly via
commutators. Previous work has shown that sub-Riemannian geometry on $\g$ yields geodesics
for locally time-optimal motion \cite{jurdjevic_optimal_2011,boscain_k_2002}. Recently, a new method of optimal control was demonstrated \cite{perrier2024solving} employing global
Cartan $KAK$ decompositions and enforcing a \emph{constant--$\theta$} condition,
one obtains an analytically solvable geodesic for quantum control problems on certain classes of Riemannian symmetric space. Other recent work has examined relaxing the requirement of full equivariance for qubit systems using variational quantum algorithms generated by horizontal elements $\p$ of $KAK$ structures \cite{wiersema2025geometric}. This partial respecting of symmetry translations is equivalent to classes of sub-Riemannian control problems invariant under translations $\p$ but not generators of rotations $\k$. We extend these results in two directions:

\begin{enumerate}
\item \textit{K--P QNN (EQNN) Integration.} \quad
We show how to parameterise the same Cartan-based geodesics using a neural-network ansatz
that is equivariant to the underlying symmetry group. Extending existing work, we show how as with full EQNNs, composing layer-wise networks using our global method which respects the $\k/\p$ partition enables the networks (and importantly, their outputs) to respect sub-Riemannian symmetries. We subsequently explore how K--P respecting networks converge to approximate sub-Riemannian geodesics, without requiring the
user to solve the geodesic equations symbolically.

\item \textit{Time-optimal QNNs} \quad
We demonstrate how QNNs integrating the K--P structure give rise to time-optimal solutions. Specifically we show (i) the existence of a solution such  that finite-depth EQNNs with appropriate
  Cartan layers can exactly represent the constant--$\theta$ geodesics found in \cite{perrier2024solving}; and (ii) the uniqueness of a solution such that that any local optimum of a suitably chosen
  cost function (fidelity plus sub-Riemannian penalty) may, under certain circumstances, converge to the global
  optimum (where target unitaries $U_{target}$ are not in the centralizer of $G$).
\end{enumerate}
Section~\ref{sec:kp} reviews
the K--P setup and the essential Cartan machinery for the constant--$\theta$ approach.
Section~\ref{sec:eqnn} introduces \emph{K--P quantum neural networks} sub-Riemannian layers. Section~\ref{sec:theorems} states and
proves the main theorems on existence and uniqueness for global optimality specific choices of target not in the centralizer of $\g$. Section~\ref{sec:examples}
provides examples and numerical illustrations. Finally, Section~\ref{sec:conclusion} 
offers concluding remarks on open problems.

\section{Background}
\label{sec:kp}

\subsection{The $K-P$ Problem}
The \emph{K--P} problem in control theory is a specific class of problems that seeks, for spaces $G$ decomposable into symmetric $K$ and anti-symmetric $P$ subspaces, the optimal horizontal control $H(t)\!\in\!\p$ with $\|H(t)\|\!\le\!\Omega$ solution that evolves $U(0)=\mathbb I$ to a given $U_{\mathrm{target}}\in G$ in minimum time $T$.  This problem can equivalently be framed as seeking the minimal sub-Riemannian length $\int_{0}^{T}\|H_{\p}(t)\|\,\mathrm{d}t$ over all horizontal curves on the symmetric space $G/K$ satisfying identical endpoint constraints.  Geodesic solutions on the compact symmetric space $G/K$ of the corresponding Schr\"odinger equation can be shown to remain within $K\exp(i\Theta)K$ orbits (which are effectively fixed once the Carten element $\Theta$ is chosen).

\subsection{Cartan decompositions}
Let $G$ be a connected semisimple Lie group (compact for simplicity of exposition) with
Lie algebra $\g$. A \emph{Cartan involution} $\chi$ partitions $\g=\k\oplus \p$, where
$\k$ is the $+1$ eigenspace ($\chi(X)=+X$ for $X\in \k$) and $\p$ the $-1$ eigenspace
($\chi(X)=-X$ for $X\in \p$).  In many quantum applications, $G\subseteq U(n)$ and
$\k$ is the \emph{maximal compact} part while $\p$ is noncompact.  This gives rise to Cartan commutation relations \cite{knapp_lie_1996}:
\begin{align}
    [\k,\k]\subseteq \k, \quad [\p,\p]\subseteq \k, \quad [\k,\p]\subseteq \p. \label{eqn:cartancommutation}
\end{align}
$K=\exp(\k)$ is a subgroup of $G$. If $G$ is compact and semisimple then $K$ is typically a maximal torus. A typical quantum control scenario involves Hamiltonians comprising generators in $\p$, while the evolution generated by the $\k$ part arises via the commutators in (\ref{eqn:cartancommutation}). We want to
implement a target $U_{\mathrm{target}}\in G$ in minimal time subject to an energy cutoff $\|H(t)\|\le \Omega$,
with $H(t)\in \p$.  In sub-Riemannian geometric terms, $\p$ defines the \emph{horizontal}
distribution. $\k$ is the \emph{vertical} direction that can be reached by curvature forms $[\p,\p]\subset \k$ \cite{montgomery_tour_2002}.  
The corresponding group $KAK$ decomposition is given by $G=K\exp(\a)K$,
where $\a\subset \p$ is an abelian subalgebra (\emph{maximally non-compact}
Cartan).  Elements in $\a$ typically look like $i\Theta$, with $\Theta$ real diagonal
in a suitable representation.  Then any $U\in G$ can be written:
\[
   U \;=\; k \,\exp(i\,\Theta)\,c
   \quad\text{for some}\; k,c\in K,\; \Theta\in \a.
\]
\cite{perrier2024solving} showed that the time-optimal solutions 
$\gamma(t)$ can be simplified if we impose $d\Theta(t)=0$ along the path,
the so-called \emph{constant--$\theta$ condition}.
Under such conditions, this yields a closed-form geodesic and the minimal time
$T$ is related to $|\sin(\mathrm{ad}_\Theta)(\Phi)|$, with $\Phi\in \k$ in the commutant
of $\Theta$. The result in \cite{perrier2024solving} can be expressed as follows. Fixing $\dot \Theta = 0$ selects the representative in each $K$-orbit that minimises the Cartan energy functional. This can be expressed in terms of the following theorem. 
\begin{theorem}[Constant--$\theta$ K--P Geodesics]
\label{thm:kpgeod}
Let $\g=\k\oplus\p$ be a Cartan decomposition of a compact semisimple Lie algebra
and let $\Theta\in \a\cap \p$ be in the non-compact Cartan subalgebra.  Suppose
$\Phi\in \k$ commutes with $\Theta$.  Then if $H(t)\in \p$ saturates $\|H\|=\Omega$
and satisfies the \emph{minimal connection} plus $\dot{\Theta}(t)=0$, the minimum time $T$ optimal
path from $U(0)=\exp(\mathrm{i}\Theta)$ to $U(T)=\exp(-\mathrm{i}X)$ 
($X\in \k$ a target) has length
\begin{align}
    \Omega\,T
   \;=\;
   \bigl|\sin(\mathrm{ad}_\Theta)(\Phi)\bigr| \label{eqn:optimaltime}
\end{align}
subject to $X=(1-\cos(\mathrm{ad}_\Theta))(\Phi)$.  
\end{theorem}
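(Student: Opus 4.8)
The plan is to read the statement as a right-invariant time-optimal control problem for $\dot U=-\mathrm{i}H(t)U$ with $H(t)\in\p$ and $\|H\|\le\Omega$, and to attack it with the Pontryagin Maximum Principle (PMP), using the constant--$\theta$ hypothesis to collapse the extremal equations into closed form. First I would form the Pontryagin Hamiltonian $\mathcal H=\langle\Lambda,-\mathrm{i}HU\rangle$, trivialise the costate to the algebra by right translation, $M(t)=\mathrm{Ad}^{*}_{U(t)}\Lambda(t)\in\g$ (identifying $\g^{*}\cong\g$ through the Killing form), and observe that the norm constraint forces saturation $\|H\|=\Omega$ with $H=\Omega\,M_{\p}/\|M_{\p}\|$, where $M_{\p}$ is the $\p$-part of $M$. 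Since the speed is then constant, the sub-Riemannian length $\int_0^T\|H\|\,\mathrm{d}t$ is proportional to $T$, so minimising time is equivalent to minimising horizontal length and it suffices to evaluate that length on the extremal.

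Second, I would exploit the standard structure of normal geodesics for left-invariant sub-Riemannian metrics, in which the conserved vertical momentum $\Phi\in\k$ lets the extremal be integrated in closed form as a product of one-parameter subgroups. Writing the trajectory in the $KAK$ form $U(t)=k_1(t)\exp(\mathrm{i}\Theta)k_2(t)$ and imposing $\dot\Theta=0$, I would differentiate and split $\dot U U^{-1}$ into its $\k$ and $\p$ components using the Cartan relations \eqref{eqn:cartancommutation}; vanishing of the $\a$-component is precisely the minimal-connection condition and pins the relation between $\dot k_1,\dot k_2$ in terms of $\Phi$. The crucial algebraic fact is that $\mathrm{ad}_\Theta$ interchanges $\k$ and $\p$ (again by \eqref{eqn:cartancommutation}), so its even powers preserve $\k$ and its odd powers send $\k\to\p$; hence the dressed generator $\mathrm{Ad}_{\exp(\mathrm{i}\Theta)}(\Phi)=e^{\mathrm{ad}_{\mathrm{i}\Theta}}(\Phi)$ splits cleanly, its $\k$-part being $\cos(\mathrm{ad}_\Theta)(\Phi)$ and its $\p$-part being $\sin(\mathrm{ad}_\Theta)(\Phi)$ (up to the conventional factor of $\mathrm{i}$ identifying the noncompact part).

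Third, I would integrate the reduced system over $[0,T]$ and read off both identities from this splitting. The net vertical ($\k$) displacement produced by the horizontal curve is $(\mathrm{id}-\cos(\mathrm{ad}_\Theta))(\Phi)$, which, matched against the endpoint $U(T)=\exp(-\mathrm{i}X)$, yields the constraint $X=(1-\cos(\mathrm{ad}_\Theta))(\Phi)$; the horizontal length is the integral of the saturated speed $\Omega$, and evaluating the complementary $\p$-part gives $\Omega T=|\sin(\mathrm{ad}_\Theta)(\Phi)|$, which is \eqref{eqn:optimaltime}.

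I expect the main obstacle to be twofold. The first, and most delicate, point is the interplay of the commutation hypothesis with nondegeneracy: if $\Phi$ lies wholly in the centraliser $\mathfrak{z}_{\k}(\Theta)$ then $\mathrm{ad}_\Theta(\Phi)=0$ and the right-hand side of \eqref{eqn:optimaltime} collapses, so the argument must carefully decompose the conserved momentum into its centralising part (which is what actually enforces $\dot\Theta=0$) and the complementary $\mathrm{ad}_\Theta$-active part (which drives the $\sin/\cos$ holonomy and the endpoint map), and verify that \eqref{eqn:optimaltime} is governed by the latter. The second, harder, point is upgrading local extremality to the global time-optimality asserted: this requires controlling conjugate and cut points of the geodesic on the compact symmetric space $G/K$, for instance by diagonalising $\mathrm{ad}_\Theta^2$ on $\p$ and showing the constant--$\theta$ geodesic is attained strictly before the first cut time, so that no shorter horizontal curve joins $\exp(\mathrm{i}\Theta)$ to $\exp(-\mathrm{i}X)$.
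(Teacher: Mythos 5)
The paper does not actually prove Theorem~\ref{thm:kpgeod}; it imports the result from \cite{perrier2024solving,perrier2024quantum} and only records the surrounding machinery (the $KAK$ form, the minimal connection $k^\inv dk=-\cos(\mathrm{ad}_\Theta)(dc\,c^\inv)$, and the $\sin/\cos$ splitting). So there is no in-paper argument to compare against line by line. Measured against that described machinery, your plan is the right one and reconstructs it faithfully: PMP with the costate trivialised to $\g$, saturation $\|H\|=\Omega$ reducing time-optimality to length-minimality, the $KAK$ ansatz with $\dot\Theta=0$, the minimal connection as the vanishing $\a$-component of $\dot U U^{-1}$, and the observation that $\mathrm{ad}_\Theta$ swaps $\k$ and $\p$ so that $\mathrm{Ad}_{\exp(\mathrm{i}\Theta)}(\Phi)$ splits into $\cos(\mathrm{ad}_\Theta)(\Phi)\in\k$ and $\sin(\mathrm{ad}_\Theta)(\Phi)\in\p$. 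That is exactly the engine behind \eqref{eqn:optimaltime} and the endpoint constraint $X=(1-\cos(\mathrm{ad}_\Theta))(\Phi)$.

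That said, what you have written is a plan, not a proof, and the two steps you yourself flag as obstacles are precisely the ones that remain undone. First, the degeneracy issue is real and is actually a defect of the statement as transcribed: if $\Phi$ literally commutes with $\Theta$ then $\mathrm{ad}_\Theta(\Phi)=0$, so $\sin(\mathrm{ad}_\Theta)(\Phi)=0$ and $\cos(\mathrm{ad}_\Theta)(\Phi)=\Phi$, collapsing both \eqref{eqn:optimaltime} and the constraint to the trivial case $T=0$, $X=0$. In the source the conserved $\k$-momentum must be decomposed into a centralising part (which is what forces $\dot\Theta=0$) and an $\mathrm{ad}_\Theta$-active part that drives the holonomy; the $\Phi$ entering the formulas is the latter, and a complete proof has to make that decomposition explicit rather than assume the hypothesis at face value. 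Second, your argument as sketched only produces a normal extremal; the claim of \emph{minimum} time $T$ requires ruling out shorter horizontal curves, i.e.\ the cut/conjugate-locus analysis you defer (the paper itself concedes in Section~\ref{sec:conclusion} that cut and critical loci are assumed away). Neither gap invalidates your route -- it is the same route the cited works take -- but both must be closed before the sketch becomes a proof of the theorem as stated.
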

This solution leads to:
\begin{align}
   U(t)
   \;=\;
   \exp\bigl(-\mathrm{i}\,\Lambda\,t\bigr)
   \,\sin(\mathrm{ad}_\Theta)(\Phi)
   \,\exp\bigl(+\mathrm{i}\,\Lambda\,t\bigr)
   \quad\text{with}\quad
   \Lambda=\frac{\cos(\mathrm{ad}_\Theta)(\Phi)}{T}. 
\end{align}
See \cite{perrier2024solving,perrier2024quantum} for proofs and exposition. Theorem \ref{thm:kpgeod} shows that for certain classes of quantum control problem corresponding to symmetric spaces, there exist globally optimal controls can be solved analytically. Below, we adapt and integrate Theorem \ref{thm:kpgeod} into a quantum neural network setting.

\section{K--P QNNs}
\label{sec:eqnn}
 EQNNs 
architect neural networks to respect group symmetries in ways that facilitate task optimisation \cite{nguyen_theory_2022,ragone_representation_2023} and have shown success in quantum optimisation tasks. EQNNs have layers designed so that transformations by a group $G$ act consistently on 
inputs and outputs \cite{larocca_group-invariant_2022}. The K--P problem has an explicit decomposition $\g=\k\oplus \p$ associated with
the involution $\chi$.  A route to building an QNN respecting K--P structure is to make 
the layer transformations equivariant with respect to symmetry subgroup $K$ (the subgroup
generated by $\k$).  In this formulation,  unitary conjugation
$k\in K$ transforms the QNN parameters consistent with Eqn. (\ref{eqn:cartancommutation}) above. In the simplest sense, an K--P layer can be written:
\begin{equation}
\label{eq:eqnnlayer}
   \mathcal{L}_{\boldsymbol{\theta}}(\rho)
   \;=\;
   \exp\bigl(-\mathrm{i}\,H_{\mathrm{p}}(\boldsymbol{\theta})\bigr)
   \,\rho\,
   \exp\bigl(+\mathrm{i}\,H_{\mathrm{p}}(\boldsymbol{\theta})\bigr),
\end{equation}
where $H_{\mathrm{p}}(\boldsymbol{\theta})\in \p$ is restricted to the horizontal
subalgebra $\p$.  Conjugation by $\k$ then implements a vertical shift, but is consistent
with the underlying symmetry (cycling generators within $\k$ and $\p$ respectively).  The key relation is set by $[\p,\p]\subset \k$ which allows generalised rotations to be synthesised in a controlled manner using generators in $\p$, we can keep the distribution structure.In K--P tasks we typically want controls only in $\p$.  To replicate the constant--$\theta$ geodesics of \cite{perrier2024solving}, we construct layer Eq.~\eqref{eq:eqnnlayer} in a way that fixes $\Theta$ and so that the net effect on $\k$ arises from commutators $[\p,\p]$.  In \cite{perrier2024solving}, the minimal connection is given by: 
\[
   k^\inv dk
   \;=\;
   -\cos(\mathrm{ad}_\Theta)(dc\,c^\inv),
\]
and hence $U(t)$ could be generated purely by $\p$.  We encode such constraints
into the QNN layer as follows: 
\begin{enumerate}
\item \textit{Initialise $\Theta$.}  First, we choose an element $\Theta\in \a\cap\p$
(so it is in the noncompact Cartan subalgebra).  In an $n$-qubit representation,
this might be a block diagonal or simple diagonal with real entries.

\item \textit{Parameterise commutant $\Phi\in \mathrm{Comm}(\Theta)\cap \k$.}
   Because $\Phi$ commutes with $\Theta$, $\mathrm{Ad}_{\exp(\mathrm{i}\Theta)}(\Phi)=\Phi$.

\item \textit{Generate horizontal pulses.}  The net effect of turning on $\Phi$ plus
$\sin(\mathrm{ad}_\Theta)(\Phi)\in \p$ replicates the constant--$\theta$ geodesic. The layer exponentiates $\sin(\mathrm{ad}_\Theta)(\Phi)$
for a certain amplitude $\alpha$, while also exponentiating $\Phi$ for a
turning rate $\lambda$.

\item \textit{Repeat in a multi-layer QNN.}  Several such layers can be stacked or
interleaved with standard universal gates.  If the objective is to achieve
the final $U_{\mathrm{target}}$ with minimal $\|H\|\,T$, the network can be trained
via a cost function $C=1-\mathrm{fidelity}+\kappa\cdot(\text{time penalty})$,
just as in typical VQA approaches \cite{cerezo_variational_2021}.
\end{enumerate}

Because the layer is built from $\{\Theta,\Phi\}$ with $\Theta$ fixed,
the QNN is automatically equivariant under transformations in $K$ that fix
$\Theta$ (or map it to an isomorphic subalgebra).  In practice, the numerical training need not solve the entire geodesic system explicitly, but the final result (provided local minima are avoided) ought to match the sub-Riemannian solution (that is, a global length-minimising horizontal curve).

\section{Existence and uniqueness of K--P circuits}
\label{sec:theorems}
We now set out results showing (1) the existence of a finite-depth EQNN with
Cartan layers can represent the constant--$\theta$ K--P solution, and
(2) the uniqueness of cost function minima via convergence with global minima once sub-Riemannian constraints
are imposed.  First, we show the existence of a finite-depth EQNN circuit for the constant-$\theta$ solution.

\begin{theorem}[K--P QNN Circuit (Existence)]
\label{thm:expr}
Consider a quantum system whose algebra $\g$ has a Cartan decomposition $\k\oplus \p$.
Let $\Theta\in \a\cap \p$, and let $\Phi\in \k$ commute with $\Theta$.  Then there
is a finite-depth EQNN ansatz,
\begin{equation}
\label{eq:eqnnansatz}
  U(\boldsymbol{\alpha})
  \;=\;
  \prod_{j=1}^m
    \exp\bigl(\mathrm{i}\,\alpha_{j}^{(\p)}\,Y_{j}^{(\p)}\bigr)
    \;\exp\bigl(\mathrm{i}\,\alpha_{j}^{(\k)}\,Y_{j}^{(\k)}\bigr),
\end{equation}
with $Y_{j}^{(\p)}\in\p$, $Y_{j}^{(\k)}\in\k$ and controls $\alpha \in \Real^n$ (for $n=\dim \g$), that can exactly 
realize the constant--$\theta$ solution in Theorem~\ref{thm:kpgeod} for any choice
of $\Theta,\Phi$. In particular, $U(\boldsymbol{\alpha}^*)=U(T)$ from
Theorem~\ref{thm:kpgeod} for some $\boldsymbol{\alpha}^*$.
\end{theorem}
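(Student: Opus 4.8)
The plan is to reduce the statement to a Lie-theoretic reachability fact and then, for the stronger claim of \emph{exact} realisation, to read off a factorisation directly from the closed-form geodesic. First I would note that the endpoint $U(T)=\exp(-\mathrm{i}X)$ of Theorem~\ref{thm:kpgeod}, with $X=(1-\cos(\ad_\Theta))(\Phi)\in\k$, is a fixed element of the connected compact semisimple group $G$ (indeed of $K$). The ansatz \eqref{eq:eqnnansatz} is a product of one-parameter subgroups generated by the $Y_j^{(\p)}\in\p$ and $Y_j^{(\k)}\in\k$. Choosing these generators to span $\g$, the Cartan relations $[\p,\p]\subseteq\k$ make the horizontal distribution bracket-generating, so the Lie algebra generated by the chosen elements is all of $\g$. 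Reaching $U(T)$ is then the question of whether products of these exponentials exhaust $G$.

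For the non-constructive half I would invoke surjectivity of the exponential-product map onto a compact connected group. Since the generators span $\g$, the differential of $(\alpha_1,\dots)\mapsto\prod_j\exp(\mathrm{i}\alpha_j Y_j)$ is surjective at a suitable point, so by the inverse function theorem the image contains a neighbourhood of the identity; a neighbourhood of the identity generates the connected group $G$, and compactness of $G$ lets finitely many product-blocks cover it. This fixes a finite depth $m$ and produces parameters $\boldsymbol{\alpha}^*$ with $U(\boldsymbol{\alpha}^*)=U(T)$.

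The route I would actually write up, because it gives exactness and respects the $\k/\p$ bookkeeping, is constructive. The geodesic of Theorem~\ref{thm:kpgeod} has the conjugation form $\exp(-\mathrm{i}\Lambda t)\,(\cdots)\,\exp(+\mathrm{i}\Lambda t)$ with $\Lambda=\cos(\ad_\Theta)(\Phi)/T\in\k$ and the conjugated generator $\sin(\ad_\Theta)(\Phi)\in\p$. Each such conjugation block is already an alternating $\k$-then-$\p$-then-$\k$ word, exactly the shape of a layer in \eqref{eq:eqnnansatz}, so one reads off the $Y_j^{(\p)},Y_j^{(\k)}$ and amplitudes $\alpha_j$ directly. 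If one insists on genuinely horizontal ($\p$-only) controls in keeping with the K--P philosophy, each $\k$-factor can be further resolved into a finite product of $\p$-exponentials through the standard commutator identity together with $[\p,\p]=\k$.

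The main obstacle is the \emph{finiteness-with-exactness} clause rather than reachability itself. The surjectivity argument yields some finite $m$ but is non-constructive and gives no handle on its size; the constructive route gives exactness but requires checking that the assembled product equals $U(T)$ \emph{identically} rather than to leading Trotter order. This hinges on the bracket algebra generated by $\Lambda$ and $\sin(\ad_\Theta)(\Phi)$ closing within the small subalgebra determined by $\{\Theta,\Phi\}$, so that a finite Baker--Campbell--Hausdorff recombination terminates. Making that termination precise, and thereby bounding $m$ in terms of $\dim\g$ as the parameter count $\boldsymbol{\alpha}\in\Real^n$ suggests, is where the real work sits.
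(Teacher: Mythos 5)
Your constructive branch is, in substance, the paper's own argument: the paper also proves existence by factoring the closed-form solution into exponentials of $\p$- and $\k$-elements, setting $Y^{(\p)}_1=[\Theta,\Phi]$ and appending a single $\exp(\mathrm{i}\Lambda t)$ factor from $\k$. The difference is how each of you gets the factorisation. The paper compiles $\exp\bigl(\mathrm{i}\sin(\mathrm{ad}_\Theta)(\Phi)\bigr)$ via a Lie--Trotter expansion, which strictly speaking only yields approximation to error $\varepsilon$ and sits awkwardly with the theorem's claim of \emph{exact} realisation; you read the factorisation off the conjugation form $\exp(-\mathrm{i}\Lambda t)(\cdots)\exp(+\mathrm{i}\Lambda t)$ directly, which is the cleaner route to exactness. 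Your residual worry about Baker--Campbell--Hausdorff termination is actually unnecessary: since $\mathrm{ad}_\Theta$ maps $\k\to\p$ and $\p\to\k$, the odd series $\sin(\mathrm{ad}_\Theta)(\Phi)$ is a \emph{single} element of $\p$ and $\cos(\mathrm{ad}_\Theta)(\Phi)$ a single element of $\k$, so the closed-form endpoint is already an alternating product of at most three one-parameter factors of the required shape --- no Trotterisation or recombination is needed, and the depth $m$ is trivially bounded. Your non-constructive surjectivity-plus-compactness argument is a genuine addition the paper does not make; it buys reachability of \emph{any} target in $G$ at the price of losing the explicit $\{\Theta,\Phi\}$ structure and any handle on $m$. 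One caveat applying equally to both proofs: as stated, $\Phi$ commuting with $\Theta$ forces $\mathrm{ad}_\Theta(\Phi)=0$ and hence $\sin(\mathrm{ad}_\Theta)(\Phi)=0$, so the hypothesis must be read as the weaker commutant condition of the cited source rather than literally; neither you nor the paper resolves this, but it is a defect of the theorem statement, not of your argument.
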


\begin{proof}
As $\Phi\in \mathrm{Comm}(\Theta)$, then $\mathrm{Ad}_{\exp(\mathrm{i}\Theta}(\Phi)=\Phi$,
and $\mathrm{ad}_\Theta(\Phi)=[\Theta,\Phi]$ remains in $\p$.  The Trotter expansions
of $\exp(\sin(\mathrm{ad}_\Theta)(\Phi))$ can be compiled into a finite product of
exponentials in $\p$ and $\k$.  Letting $Y^\p_1=[\Theta,\Phi]\in\p$, we see
that $\sin(\mathrm{ad}_\Theta)(\Phi)$ is a linear combination of repeated commutators
of $Y^\p_1$ with $\Theta$.  Repeated commutators remain in $\p$ or $\k$ by the standard Cartan relations. Hence, we can approximate $\exp\bigl(\mathrm{i}\,\sin(\mathrm{ad}_\Theta)(\Phi)\bigr)$
as a product of exponentials of $\p$ or $\k$.  Appending a single factor
$\exp(\mathrm{i}\,\Lambda\,t)$ in $\k$ recovers the $\cos(\mathrm{ad}_\Theta)(\Phi)$
term.  Thus, \eqref{eq:eqnnansatz} suffices to represent $U(T)$. 
The condition that $\boldsymbol{\alpha}\mapsto (Y_j^{(\p)},Y_j^{(\k)})$
behaves in an equivariant manner (i.e.\ transforms consistently under $k\in K$)
follows from choosing the $Y_j$ in each layer to be sums of $\p$ or $\k$ sub-blocks, respecting $\chi$.
\qed
\end{proof}
Because the generators generate a dense subgroup of $G$ (by the Cartan decomposition), the Lie-Trotter approximation in principle allows approximation to arbitrary accuracy up to Trotter error $\varepsilon$ using a finite product of $\p$ and $\k$ exponentials. The control parameters $\alpha$ may be learnt and optimised according to a typical VQA optimisation algorithm. Next, we show that any local optimum of a standard cost function in the EQNN approach must coincide with the global sub-Riemannian geodesic solution. 
%
%
\begin{theorem}[K--P Stationary Points: Uniqueness]\label{thm:KPuniqueness}
Let \,$U(\boldsymbol{\alpha})$ be a finite-depth circuit of the form \eqref{eq:eqnnansatz}, 
where each layer is generated by operators in $\p$ (the ``horizontal'' subalgebra) 
with or without additional commutator-generated rotations from $\k$. 
Define the cost function
\begin{equation}\label{eq:CostFunction}
   C(\boldsymbol{\alpha})
   \;=\;
   \bigl[\,
      1 \;-\;\mathrm{Fid}\bigl(U(\boldsymbol{\alpha}),\,U_{\mathrm{target}}\bigr)
   \bigr]
   \;+\;
   \gamma\,
   \Bigl(\,\mathcal{L}(\boldsymbol{\alpha}) \;-\;\Omega\,T\,\Bigr)^{2} 
\end{equation}
where \,$\mathrm{Fid}(\cdot,\cdot)$ is a fidelity-like measure between unitaries
(e.g.\ $1-\mathrm{Re}\{\mathrm{Tr}[U^\dagger(\boldsymbol{\alpha})\,U_{\mathrm{target}}]\}$),
$\gamma>0$ is a weighting constant, and 
\,$\mathcal{L}(\boldsymbol{\alpha})=\int_{0}^{T}\!\|\!H_{\p}(t)\|\!\;dt$ 
represents the sub-Riemannian path length of the trajectory 
\,$U(\boldsymbol{\alpha})$ if \,$\|H_{\p}(t)\|\le \Omega$. 
Suppose that 
\begin{equation}\label{eqn:StationaryCondition}
   \nabla_{\!\boldsymbol{\alpha}}\,C(\boldsymbol{\alpha}^*) \;=\; 0
   \quad\text{and}\quad
   \|H_{\p}(t)\|\le\Omega \;\,\forall\,t\,.
\end{equation}
Then $U(\boldsymbol{\alpha}^*)$ coincides with the constant--$\theta$ geodesic from 
Theorem~\ref{thm:kpgeod} and is (globally) time-optimal under the K--P constraints.
\end{theorem}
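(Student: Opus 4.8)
The plan is to show that at $\boldsymbol{\alpha}^*$ both nonnegative terms of $C$ vanish, so that $U(\boldsymbol{\alpha}^*)=U_{\mathrm{target}}$ and $\mathcal{L}(\boldsymbol{\alpha}^*)=\Omega T$, and then to invoke uniqueness of the length-minimising horizontal geodesic from Theorem~\ref{thm:kpgeod}. Writing the infidelity as $f(\boldsymbol{\alpha})=1-\mathrm{Fid}\ge 0$ and the penalty as $g(\boldsymbol{\alpha})=(\mathcal{L}-\Omega T)^2\ge 0$, the global minimum value of $C$ is $0$ and is attained by the constant--$\theta$ solution, since that solution reaches the target (so $f=0$) with length exactly $\Omega T$ (so $g=0$). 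The stationarity condition reads
\[
  \nabla_{\!\boldsymbol{\alpha}}f \;+\; 2\gamma\,\bigl(\mathcal{L}-\Omega T\bigr)\,\nabla_{\!\boldsymbol{\alpha}}\mathcal{L}\;=\;0 ,
\]
and the task is to prove that every solution respecting $\|H_\p\|\le\Omega$ is in fact a global minimiser. First I would decompose each gradient along the horizontal distribution, using the equivariant layer structure of \eqref{eq:eqnnansatz} so that infinitesimal changes in $\boldsymbol{\alpha}$ correspond to admissible deformations of the control $H_{\p}(t)\in\p$ consistent with the Cartan grading \eqref{eqn:cartancommutation}.

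Next I would use the energy bound to pin the length term. Since $\mathcal{L}=\int_0^T\|H_\p(t)\|\,dt$ with $\|H_\p\|\le\Omega$, the length is capped, $\mathcal{L}\le\Omega T$, so $g$ is a monotone function of $\mathcal{L}$ on the feasible range and is minimised precisely at saturation $\|H_\p\|\equiv\Omega$, where $\mathcal{L}=\Omega T$. I would argue that the penalty gradient $2\gamma(\mathcal{L}-\Omega T)\nabla_{\!\boldsymbol{\alpha}}\mathcal{L}$ therefore always pushes toward saturation, so that a stationary point compatible with the bound is driven to satisfy $\mathcal{L}(\boldsymbol{\alpha}^*)=\Omega T$. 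At that point the penalty contribution vanishes and the stationarity condition collapses to $\nabla_{\!\boldsymbol{\alpha}}f=0$: the endpoint $U(\boldsymbol{\alpha}^*)$ is a critical point of the fidelity over the reachable horizontal family.

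It then remains to show this critical point is the global maximum, $U(\boldsymbol{\alpha}^*)=U_{\mathrm{target}}$. Here the hypothesis that $U_{\mathrm{target}}$ lies outside the centraliser of $G$ is decisive: it guarantees that the target carries a genuine $\a\cap\p$ component, so reaching it requires nontrivial horizontal transport and the only fidelity-critical configuration on the reachable orbit is the maximum itself. With $U(\boldsymbol{\alpha}^*)=U_{\mathrm{target}}$ and $\mathcal{L}(\boldsymbol{\alpha}^*)=\Omega T=\bigl|\sin(\mathrm{ad}_\Theta)(\Phi)\bigr|$, the uniqueness of the normal length-minimising geodesic on the symmetric space $G/K$ forces $U(\boldsymbol{\alpha}^*)$ to coincide with the constant--$\theta$ extremal of Theorem~\ref{thm:kpgeod}, whence global time-optimality follows.

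The hard part is the step in the preceding paragraph: excluding spurious fidelity-critical points (and, at the level of the underlying control problem, \emph{abnormal} sub-Riemannian minimisers) at which $f>0$. This is exactly where both the non-centraliser condition and an additional regularity hypothesis are indispensable --- the former rules out symmetry-related plateaus that would otherwise seed saddles, while the latter excludes singular extremals not of constant--$\theta$ form. In generic variational-quantum landscapes such spurious stationary points and barren plateaus genuinely occur, so the crux of the argument is to show that the equivariant Cartan structure of the ansatz collapses the entire admissible critical set onto the single global geodesic.
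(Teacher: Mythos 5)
Your overall route is the same as the paper's: argue that at a stationary point both nonnegative terms of $C$ vanish, invoke Pontryagin's Maximum Principle and the uniqueness of the normal length-minimising horizontal curve to identify the endpoint trajectory with the constant--$\theta$ extremal of Theorem~\ref{thm:kpgeod}, and use the representability result of Theorem~\ref{thm:expr} to conclude that the finite-depth circuit realises it. You are in fact more explicit than the paper: you write out the stationarity equation $\nabla f + 2\gamma(\mathcal{L}-\Omega T)\nabla\mathcal{L}=0$, you observe that $\|H_\p\|\le\Omega$ forces $\mathcal{L}\le\Omega T$ so the penalty is minimised at saturation rather than at small length, and you correctly identify the non-centraliser hypothesis and the exclusion of abnormal extremals as load-bearing.

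The decisive step, however, is missing from your proposal, and you say so yourself. Nothing you write rules out a stationary point at which the two gradient contributions cancel, i.e.\ $\nabla f = -2\gamma(\mathcal{L}-\Omega T)\nabla\mathcal{L}\neq 0$ with $f>0$ and $\mathcal{L}<\Omega T$; nor a point with $\mathcal{L}=\Omega T$, $\nabla f=0$ but $f>0$ (a saddle or plateau of the fidelity landscape). Your statement that the penalty ``pushes toward saturation'' is a heuristic about gradient flow, not an argument about the critical set, and your final claim --- that the only fidelity-critical configuration on the reachable orbit is the global maximum --- is precisely the assertion the theorem requires and is left unproven. For what it is worth, the paper's own proof does not close this gap either: it asserts that a zero-gradient point ``cannot be a spurious local minimum unrelated to the sub-Riemannian geodesic'' without argument, and relegates the exclusion of abnormal extremals and of points with $\mathcal{L}\neq\Omega T$ to a remark after the proof. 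So your proposal faithfully reproduces the paper's argument, including its central unproven step, while being more candid about where the difficulty lies.
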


\begin{proof}
By Pontryagin’s Maximum Principle and standard sub-Riemannian geometry arguments
(see \cite{jurdjevic_geometric_1997,boscain_k_2002} for details), 
the unique path of minimal $\p$-length $\int_0^T \|H_{\p}(t)\|\>dt$ 
subject to $\|H_{\p}(t)\|\le\Omega$ and $U(0)=\mathbb{I},\,U(T)=U_{\mathrm{target}}$ 
must satisfy the minimal connection equations and yield the constant--$\theta$ 
solution \eqref{eqn:optimaltime} in Theorem~\ref{thm:kpgeod}.
Since \eqref{eq:CostFunction} strictly penalizes both infidelity and any suboptimal 
$\p$-length $\mathcal{L}(\boldsymbol{\alpha})-\Omega T$, 
a stationary point $\boldsymbol{\alpha}^*$ with zero gradient 
\eqref{eqn:StationaryCondition} cannot be a spurious local minimum 
unrelated to the sub-Riemannian geodesic. 
Hence the only possible local minimiser is the globally optimal 
constant--$\theta$ solution. 
Because \eqref{eq:eqnnansatz} shows such a solution is exactly representable
by the finite-depth circuit, it follows that $U(\boldsymbol{\alpha}^*)$ 
must coincide with the time-optimal geodesic. 
\qed
\end{proof}
Because the optimal Hamiltonian here is normal (otherwise failing to satisfy the maximum principle), abnormal trajectories are excluded. Stationary points with $\mathcal{L} \ge \Omega T$ are precluded by assumptions that $\nabla_{\boldsymbol{\alpha}} C = 0$ and $\lambda(\mathcal{L} - \Omega T) = 0$ and $\lambda \geq 0$. Theorems \ref{thm:expr} and \ref{thm:KPuniqueness} show that (for certain choices of subgroup $K$ (and thus symmetric space) that K--P QNNs can be made both complete and globally convergent: (i) the finite-depth circuit is guaranteed to express a minimal-time path; and (ii) gradient-based optimisation (with an appropriate cost) can avoid spurious local minima. We briefly illustrate how K--P QNNs with a can discover the same time-optimal solution as an analytic or geometric approach.

\section{Evaluation: $\Lambda$-systems and $SU(3)$}
\label{sec:examples}
Consider the three-level $\Lambda$-system studied in \cite{perrier2024solving} where results from \cite{albertini_sub-riemannian_2020} were reproduced using the constant-$\theta$ method. \(\mathrm{SU}(3)\), has a representation in terms of Gell-Mann generators which can be decomposed into a horizontal subalgebra
\(\p\subset\g\) (those used as direct controls) and a vertical subalgebra \(\k\subset\g\) (generated indirectly via commutators). The $\Lambda$ system is
a three-level model with horizontal transitions $\p$ coupling two ground states 
to an excited state. To evaluate our K--P QNN, we construct python code to test its efficacy at reproducing the $\Lambda$-system optimisation results in \cite{perrier2024solving}. We select:
\[
   \k \;=\;\mathrm{span}\bigl\{\,{-i\lambda_3},\,{-i\lambda_6},\,{-i\lambda_7},\,{-i\lambda_8}\bigr\},
   \quad
   \p \;=\;\mathrm{span}\bigl\{\,{-i\lambda_1},\,{-i\lambda_2},\,{-i\lambda_4},\,{-i\lambda_5}\bigr\}.
\]
We construct a finite-depth product of exponentials:
\[
   U(\boldsymbol{\alpha}) 
   \;=\;
   \prod_{j=1}^L 
      \exp\!\Bigl(\mathrm{i}\!\sum_{a\in\k}\alpha_{j,a}^{(\k)}\,Y_a\Bigr)\;
      \exp\!\Bigl(\mathrm{i}\!\sum_{b\in\p}\alpha_{j,b}^{(\p)}\,X_b\Bigr).
\]
Here, \(\{Y_a\}\subset\k\) and \(\{X_b\}\subset\p\) are the basis elements.  We implement this as the function 
\verb|circuit_forward|, which sequentially multiplies each layer's matrix exponential 
(see Theorem~\ref{thm:expr}). From Theorem~\ref{thm:KPuniqueness} (cf.~Eq.~\eqref{eq:CostFunction}), 
we define a cost that has two main terms: 
\[
   C(\boldsymbol{\alpha})
   \;=\;
   \bigl[\,1-\mathrm{Fid}\bigl(U(\boldsymbol{\alpha}),\,U_{\mathrm{target}}\bigr)\bigr]
   \;+\;
 \gamma\,\bigl(\text{path\_length}(\boldsymbol{\alpha})\bigr).
\]
In the code, we approximate the sub-Riemannian path length by the sum of the \(\ell_{2}\)-norms of 
the horizontal parameters \(\alpha^{(\p)}\) encoded in \verb|path\_length|.  
This is a simplified version of \(\int_{0}^{T}\|H_{\p}(t)\|\mathrm{d}t\), sufficient to demonstrate
the principle of penalizing the magnitude of \(\p\)-controls.
We apply a straightforward gradient descent (using JAX auto-differentiation).  By Theorem~\ref{thm:KPuniqueness}, 
no spurious local minima exist if \(\boldsymbol{\alpha}\) saturates the bracket generation assumptions and 
\(\|H(t)\|\le\Omega\).  Convergence to fidelity \(\approx1\) and small path length indicates that the learned 
solution reproduces the sub-Riemannian geodesic.  
\begin{figure}
    \centering
\includegraphics[width=0.75\linewidth]{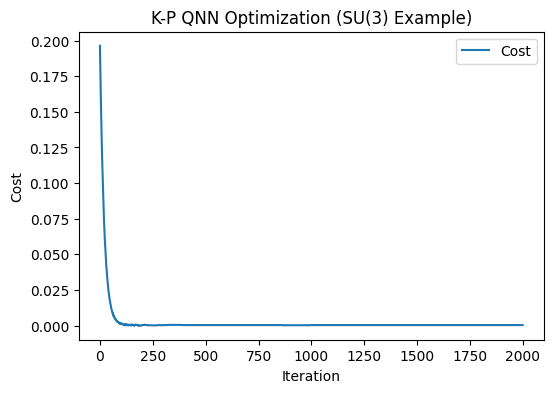}
    \caption{K--P QNN Loss function (Eqn. (\ref{eq:CostFunction}) showing convergence between $U(\alpha)$ and $U_{target}$. Convergence of both terms shows that the K--P QNN learns the optimal unitary parametrised by controls $\alpha$ (reproducing targets in \cite{albertini_sub-riemannian_2020} and \cite{perrier2024solving}).}
    \label{fig:kpqnn-loss}
\end{figure}
As shown in Fig. (\ref{fig:kpqnn-loss}), after sufficient epochs, the cost function converges close to zero, and the final unitary 
\(U(\boldsymbol{\alpha}^{*})\) attains \(\mathrm{Fid}\approx1\) (specifically 0.9998576).  
Hence, the constant--\(\theta\) time-optimal solution \(\exp\bigl(-\tfrac{\mathrm{i}\pi}{4}\,\lambda_6\bigr)\) 
is accurately reconstructed, in agreement with the analytical results (cf.\ Section~4 and Eq.~\eqref{eqn:optimaltime}).  
Thus the K--P QNN approach recovers the same geodesic solution from a purely data-driven perspective. The repository is available at \verb|https://github.com/eperrier/k-p_qnn|.

\section{Conclusions and Outlook}
\label{sec:conclusion}
We have shown how K--P QNNs, a form of EQNNs can
be constructed using the \emph{constant--$\theta$} Cartan decomposition
approach to the K--P problem.  The synergy arises from the geometric consistency:
EQNN layers that keep $\Theta$ fixed and use pulses in $\p$ effectively trace out
the same sub-Riemannian geodesic described by \cite{perrier2024solving}.  Our results show that local optimality in a typical QNN variational cost function 
indeed implies global time-optimality for certain classes of quantum control problems on symmetric spaces $G/K$. Limitations of our method include those set out in \cite{perrier2024solving}, particularly that time-optimal sequences are found for only certain targets in $\g$. We have also assumed no complications arising from cut or critical loci. Future research directions building on this work may consider:
\begin{itemize}
\item incorporating noise and decoherence by letting the cost function measure fidelity under realistic noise channels;
\item extending to non-compact or indefinite metrics (e.g. involving indefinite Killing forms or non-compact groups); 
\item examining how critical and cut loci (local component Maxwell sets, conjugate points) affect numerically-obtained local minima;
\item experimental demonstration via implementation of EQNNs using NISQ superconducting devices. 
\end{itemize}
Our work contributes to the growing literature connecting Cartan-based geodesic solutions to quantum machine learning protocols. 

\subsubsection{Acknowledgements} The author thanks Dr Chris Jackson for his considerable mentorship and discussions on the topic. This work was independently funded by the author. The author declares that they have no conflicts of interest.

\bibliographystyle{splncs04}
\bibliography{refs,referencesfinal3}

\end{document}